\relax
\documentclass[letterpaper]{article} 
\usepackage{aaai20}  
\usepackage{times}  
\usepackage{helvet} 
\usepackage{courier}  
\usepackage[hyphens]{url}  
\usepackage{graphicx} 
\urlstyle{rm} 
\usepackage{graphicx}  
\frenchspacing  
\setlength{\pdfpagewidth}{8.5in}  
\setlength{\pdfpageheight}{11in}  
 \pdfinfo{
/Title (End-to-End Game-Focused Learning of Adversary Behavior in Security Games)
/Author (Andrew Perrault, Bryan Wilder, Eric Ewing, Aditya Mate, Bistra Dilkina, Milind Tambe)}

\usepackage{amsmath}
\usepackage{amsthm}

\newtheorem{lemma}{Lemma}
\newtheorem{thm}{Theorem}

\newcommand{\denselist}{\itemsep 0pt\partopsep 0pt}

\newenvironment{proofsketch}{%
  \proof}{\endproof}

\theoremstyle{definition}

\usepackage{pbox}
\usepackage{enumerate}
\usepackage{multirow}
\usepackage{booktabs}
\usepackage{amsfonts}
\usepackage{gensymb}
\usepackage{bm}
\usepackage{subcaption}
\usepackage{accents}

\newcommand{\RR}{I\!\!R} 

\DeclareMathOperator*{\argmax}{arg\!max}

\frenchspacing  
\setlength{\pdfpagewidth}{8.5in}  
\setlength{\pdfpageheight}{11in}  
\setcounter{secnumdepth}{1}  
%
\title{End-to-End Game-Focused Learning of Adversary Behavior in Security Games}
\author{
Andrew Perrault,\textsuperscript{\rm 1}
Bryan Wilder,\textsuperscript{\rm 1}
Eric Ewing,\textsuperscript{\rm 2}
Aditya Mate,\textsuperscript{\rm 1}
Bistra Dilkina,\textsuperscript{\rm 2}
Milind Tambe\textsuperscript{\rm 1}\\
\textsuperscript{\rm 1}Center for Research on Computation and Society, Harvard\\
\textsuperscript{\rm 2}Center for Artificial Intelligence in Society, University of Southern California\\
aperrault@g.harvard.edu, bwilder@g.harvard.edu, ericewin@usc.edu, \\ aditya\_mate@g.harvard.edu,
dilkina@usc.edu, milind\_tambe@harvard.edu
}
\begin{document}
\maketitle
\begin{abstract}
Stackelberg security games are a critical tool for maximizing the utility of limited defense resources to protect important targets from an intelligent adversary. 
Motivated by green security, where the defender may only observe an adversary's response to defense on a limited set of targets, we study the problem of learning a defense that \emph{generalizes} well to a new set of targets with novel feature values and combinations. Traditionally, this problem has been addressed via a \emph{two-stage} approach where an adversary model is trained to maximize predictive accuracy without considering the defender's optimization problem. We develop an end-to-end \emph{game-focused} approach, where the adversary model is trained to maximize a surrogate for the defender's expected utility. We show both in theory and experimental results that our game-focused approach achieves higher defender expected utility than the two-stage alternative when there is limited data.
\end{abstract}

\section{Introduction}
Many real-world settings call for allocating limited defender resources against a strategic adversary, such as protecting public infrastructure \cite{Gan2015SecurityGW}, transportation networks \cite{okamoto2012solving}, large public events \cite{yin2014game}, urban crime \cite{zhang2015keeping}, and green security \cite{fang2015security}.
\emph{Stackelberg security games (SSGs)} are a critical framework for computing defender strategies that maximize expected defender utility to protect important targets from an intelligent adversary \cite{tambe2011security}.

In many SSG settings, the adversary's preferences over targets are not known a priori. In early work, the adversary's preferences were estimated via the judgments of human experts \cite{tambe2011security}. In domains where there are many interactions with the adversary, we can leverage this history using machine learning instead. This line of work, started by Letchford et al.~\shortcite{Letchford2009LearningAA}, has received extensive attention in recent years (see related work).

We use protecting wildlife from poaching \cite{fang2015security} as a motivating example. The adversary's (poacher's) behavior is observable because snares are left behind, which rangers aim to remove (see Fig.~\ref{fig:snares}). Various features such as animal counts, distance to the edge of the park, weather and time of year may affect how attractive a particular target is to the adversary.
The training data consists of adversary behavior in the context of particular sets of targets, and our objective is to achieve a high defender utility when we are playing against the same adversary and new sets of targets.
For the problem of poaching prevention, Gholami et al.~\shortcite{gholami2018adversary} use around 20 features per target and observe tens of thousands of distinct targets (i.e., combinations of feature values). Rangers patrol a small portion of the park each day and aim to predict poacher behavior across a large park consisting of targets with novel feature values.

The standard approach to the problem breaks it into two stages. In the first, the adversary model is fit to the historical data to minimize an accuracy-based loss function, and in the second, the defender covers the targets (via a mixed strategy) to maximize utility against the learned model. 
It is true that, in a worst-case analysis, a model that is more accurate in a global sense induces a better coverage (see Sinha et al.~\shortcite{sinha2016learning} and Haghtalab et al.~\shortcite{haghtalab2016three}), but a model that accurately predicts the \emph{relative} values of ``important'' targets may achieve high defender utility with weak global accuracy. For example, in a game with many low-value targets, the estimates of the values of the low-value targets can be wildly inaccurate and still yield a high defender utility (see Sec.~\ref{sec:twostageimpact} for an example).

In our \emph{game-focused} approach, in contrast to a two-stage approach, we focus on learning a model that yields a high defender expected utility from the start. We train a predictive model end-to-end (i.e., considering the effects of the optimization problem) using an estimate of defender expected utility as our loss function. This approach has the advantage of focusing learning on ``important'' targets that have a large impact on the defender expected utility, and not being distracted by irrelevant targets (e.g., those with low value for both the attacker and defender). For example, in our human subject data experiments, two-stage achieves 2--20\% lower cross entropy, but worse defender expected utility.
Performing game-focused training requires us to overcome several technical challenges, including forming counterfactual estimates of the defender's expected utility and differentiating through the solution of a nonconvex optimization problem.

In summary, our contributions are: \emph{First}, we provide a theoretical justification for why our game-focused approach can outperform two-stage approaches in SSGs.
\emph{Second}, we overcome technical challenges to develop a game-focused learning pipeline for SSGs.
\emph{Third}, we test our approach on a combination of synthetic and human subject data and show that game-focused learning outperforms a two-stage approach in settings where the amount of data available is small and when there is wide variation in the adversary's values for the targets.

\paragraph{Related Work.} There is a rich literature on SSGs, ranging from uncertain observability  \cite{korzhyk2011solving} to disguised defender resources \cite{guo2017comparing} to extensive-form models \cite{cermak2016using} to patrolling on graphs \cite{basilico2012patrolling,basilico2017adversarial}. In particular, learning to maximize the defender's payoff from repeated play has been a subject of extensive study. It is important to distinguish between the active learning case \cite{Letchford2009LearningAA,xu2016playing,blum_haghtalab_procaccia_2017}, where the defender may gather information through her choice of strategy, and the passive case, where the defender does not have control over the training data. We consider each case to be valuable but focus on the passive case because we believe it is encountered more frequently in domains of interest. In the anti-poaching setting, parks often have historical data that far exceeds what can be actively collected in the short term.

Bounded rationality models are a critical component of the SSG literature because they allow the defender to achieve higher utilities against many realistic attackers. They have been the subject of extensive study since their introduction by Pita et al.~\shortcite{Pita2010RobustST} (e.g., Cui and John~\shortcite{cui2014empirical} and Abbasi et al.~\shortcite{abbasiadversaries}, who develop a distinct line of work, inspired by psychology).
We focus on the \emph{quantal response (QR)} \cite{YANG2013440} model and especially the \emph{subjective utility quantal response (SUQR)} model \cite{nguyen2013analyzing}. SUQR is simple, widely used and has been shown to be effective in practice.

Sinha et al.~\shortcite{sinha2016learning} and Haghtalab et al.~\shortcite{haghtalab2016three} provide probabilistic bounds on the learning error for two-stage approaches for generalized SUQR attackers in the passive and weakly active cases, respectively. Both works translate these bounds into the guarantees on the defender's expected utility in the worst case. Our focus is on the orthogonal issue of how to train \emph{any} differentiable predictive model end-to-end with gradient descent, including deep learning architectures that are the state of the art for many learning tasks. These methods can scale to many features and complicated relationships and are one of the main appeals of two-stage approaches. We use SUQR implemented on a neural network as an illustrative example, but our approach can be applied to other bounded rationality models, as we discuss in Sec.~\ref{sec:decision}.


Outside of SSGs, Ling et al.~\shortcite{ling2018game,ling2019game} use a differentiable QR equilibrium solver to reconstruct the payoffs of both players in a game from observed play. Hartford et al.~\shortcite{hartford2016deep} and Wright and Leyton-Brown~\shortcite{WRIGHT201716} study the problem of predicting play in unseen two-player simultaneous-move games with a small number of actions per player, and Hartford et al.~\shortcite{hartford2016deep} build a deep learning architecture for this purpose. These works focus on prediction rather than optimization.

We briefly discuss related work in end-to-end learning for decision-making in non-game-theoretic contexts (see Donti and Kolter~\shortcite{donti2017task} for a more complete discussion). New technical issues arise due to the presence of the adversary, such as counterfactual estimation and nonconvexity.
In their study of parameter sensitivity, Rockafellar and Wets~\shortcite{rockafellar2009variational} provide a comprehensive theoretical analysis of differentiating through optimization. Bengio~\shortcite{Bengio1997UsingAF} was first to train a learning system for a more complex task by directly differentiating through the outcome of applying parameterized rules. Amos and Kolter~\shortcite{amos2017optnet} provide analytical derivatives for constrained convex problems. This analytic approach is extended to stochastic optimization by Donti et al.~\shortcite{donti2017task} and to submodular optimization by Wilder et al.~\shortcite{wilder2019melding}. Demirovic et al.~\shortcite{Demirovic2019PredictOptimiseWR} provide a theoretically optimal framework for ranking problems with linear objectives.

\section{Setting}
\label{sec:setting}
\paragraph{Stackelberg Security Games (SSGs).}
Our focus is on optimizing defender strategies for SSGs, which describe the problem of protecting a set of targets given limited defense resources and constraints on how the resources may be deployed \cite{tambe2011security}. Formally, an SSG is a tuple $\{\mathcal{T}, \bm{u}_d, \bm{u}_a, C_d\}$, where $\mathcal{T}$ is a set of targets, $\bm{u}_d: \mathcal{T} \rightarrow \RR_{\leq 0}$ is the defender's payoff if each target is successfully attacked, $\bm{u}_a: \mathcal{T} \rightarrow \RR_{\geq 0}$ is the attacker's, and $C_d$ is the set of constraints the defender's strategy must satisfy. Both players receive a payoff of zero when the attacker attacks a target that is defended.

The game has two time steps: the defender computes a mixed strategy that satisfies the constraints $C_d$, which induces a \emph{marginal coverage probability (or coverage)} $\bm{p}=\{ \bm{p}_i: i \in \mathcal{T} \}$. The attacker's \emph{attack function} $\bm{q}$ determines which target is attacked, inducing an \emph{attack probability} for each target. The defender seeks to maximize her expected utility:
\begin{small}
\begin{align}
    \label{eq:defenders}
    &\max_{\bm{p} \textrm{ satisfying } C_d} \textsc{DEU}(\bm{p}; \bm{q})=\notag \\ &\max_{\bm{p} \textrm{ satisfying } C_d}\sum_{i \in \mathcal{T}} (1-\bm{p}_i) \bm{q}_i(\bm{u}_a, \bm{p}) \bm{u}_d(i).
\end{align}
\end{small}
The attacker's $q$ function can represent a rational attacker, e.g., $\bm{q}_i(\bm{p}, \bm{u}_a) = 1 \textrm{ if } i=\argmax_{j \in \mathcal{T}}(1-\bm{p}_j)\bm{u}_a(j) \textrm{ else } 0$, or a boundedly rational attacker. 
A QR attacker \cite{mckelvey1995quantal} attacks each target with probability proportional to the exponential of its payoff scaled by a constant $\lambda$, i.e., $\bm{q}_i(\bm{p}) \propto \exp(\lambda(1-\bm{p}_i)\bm{u}_a)$. An SUQR \cite{nguyen2013analyzing} attacker attacks each target with probability proportional to the exponential of an \emph{attractiveness function}:
\begin{small}
\begin{align}\label{eq:suqr}
    \bm{q}_i(\bm{p}, \bm{y}) \propto \exp(w\bm{p}_i + \phi(\bm{y}_i)),
\end{align}
\end{small}where $\bm{y}_i$ is a vector of features of target $i$ and $w<0$ is a constant. We call $\phi$ the \emph{target value function}. We focus our effort on learning $\phi$ because $w$ can easily be learned using existing techniques, such as the maximum likelihood estimation (MLE) approach of Sinha et al.~\shortcite{sinha2016learning}, assuming we have the ability to play different defender strategies against the same set of targets. MLE estimates converge rapidly, as shown by Fig.~\ref{fig:w_convergence}, which demonstrates learning in an eight-target game, averaged over 20 trials. Once we have an accurate $w$ estimate, it can be transferred to all games against the same adversary.

\begin{figure}
\centering
\begin{subfigure}{0.22\textwidth}
\centering
\includegraphics[width=0.8\linewidth]{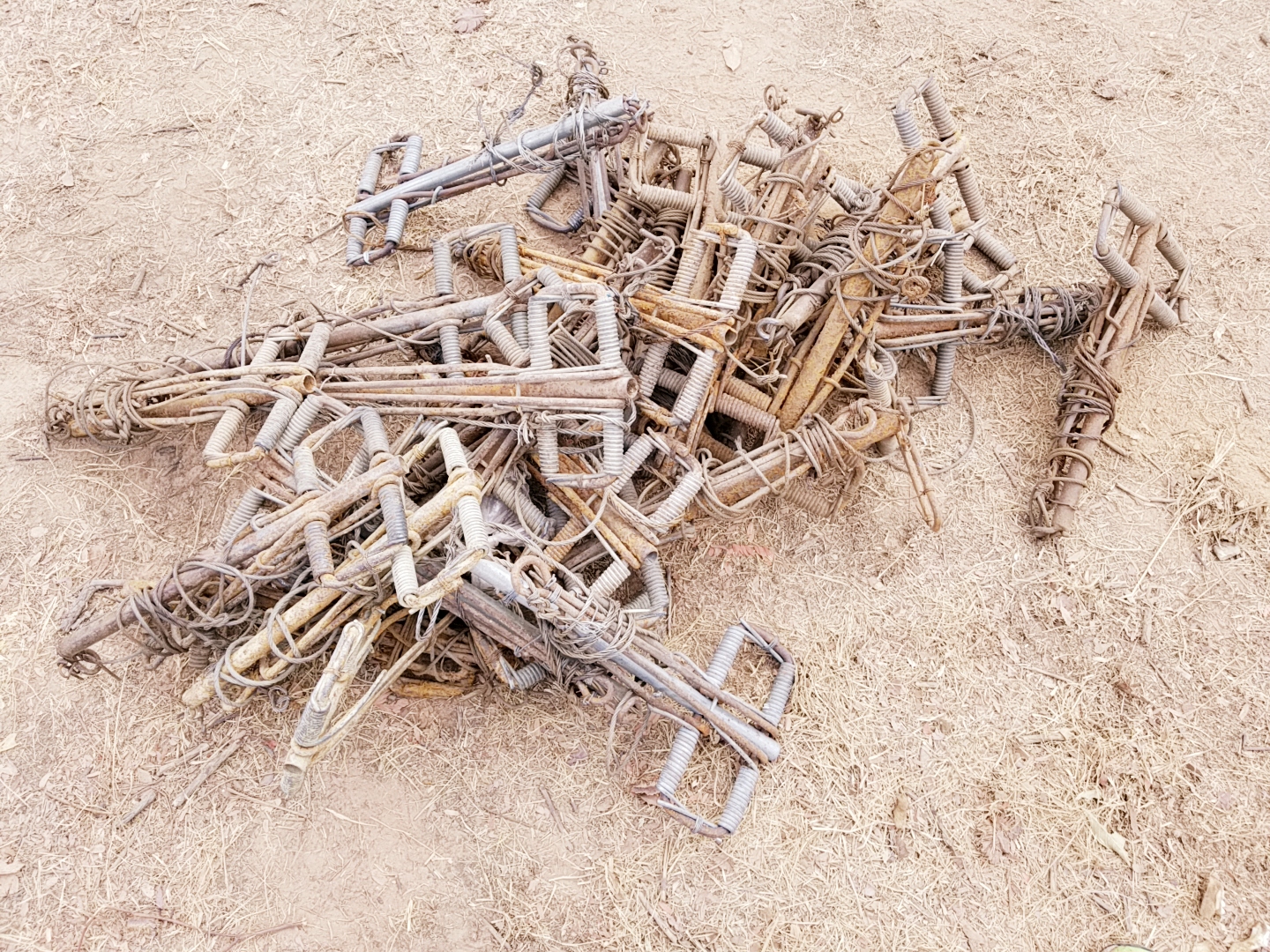}
\captionof{figure}{Snares removed by rangers in Srepok National Park, Cambodia.}
\label{fig:snares}
\end{subfigure}
\hspace{0.01\textwidth}
\begin{subfigure}{0.22\textwidth}
\centering
\includegraphics[width=1.0\linewidth]{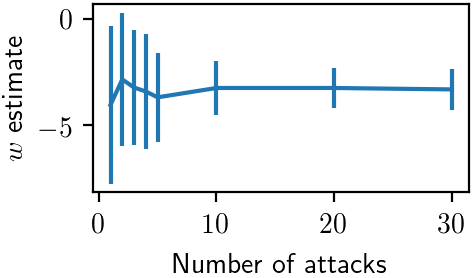}
\captionof{figure}{MLE estimate of $w$ converges quickly to the true value of $-4$. Error bars indicate one standard deviation.}
\label{fig:w_convergence}
\end{subfigure}
\end{figure}

\paragraph{Learning in SSGs.}We consider the problem of learning to play against an attacker with an unknown attack function $\bm{q}$. We observe attacks made by the adversary against sets of targets with differing features, and our goal is to generalize to new sets of targets with unseen feature values.

Formally, let $\langle \bm{q}, C_d, D_{\textrm{train}}, D_{\textrm{test}} \rangle$ be an instance of a \emph{Stackelberg security game with latent attack function (SSG-LA)}.
$\bm{q}$, which is not observed by the defender, is the true mapping from the features and coverage of each target to the probability that the attacker attacks that target. $C_d$ is the set of constraints that a mixed strategy defense must satisfy for the defender.
$D_{\textrm{train}}$ are \emph{training games} of the form $\langle \mathcal{T}, \bm{y}, \mathcal{A}, \bm{u}_d, \bm{p}_{\textrm{historical}} \rangle$, where $\mathcal{T}$ is the set of targets, and $\bm{y}$, $\mathcal{A}$, $\bm{u}_d$ and $\bm{p}_{\textrm{historical}}$ are the features, observed attacks, defender's utility function, and historical coverage probabilities, respectively, for each target $i \in \mathcal{T}$. $D_{\textrm{test}}$ are \emph{test games} $\langle \mathcal{T}, \bm{y}, \bm{u}_d \rangle$, each containing a set of targets and the associated features and defender values for each target. We assume that all games are drawn i.i.d. In a green security setting, the training games represent the results of patrols on limited areas of the park and the test games represent the entire park.

The defender's goal is to select a coverage function $\bm{x}$ that takes the parameters of each test game as input and maximizes her expected utility across the test games against the attacker's true $\bm{q}$:
\begin{small}
\begin{align}
\max_{\bm{x} \textrm{ satisfying } C_d}\mathop{\mathbb{E}}_{\langle \mathcal{T}, \bm{y}, \bm{u}_d\rangle \sim D_{\textrm{test}}}\left[\textsc{DEU}(\bm{x}(\mathcal{T}, \bm{y}, \bm{u}_d); \bm{q}) \right].
\end{align}
\end{small}
To achieve this, she can observe the attacker's behavior in the training data and learn how he values different combinations of features.

\paragraph{Two-Stage Approach.} A standard two-stage approach to the defender's problem is to estimate the attacker's $\bm{q}$ function from the training data and optimize against the estimate during testing. This process, which is illustrated in the top of Fig.~\ref{fig:decisionfocused}, resembles multiclass classification where the targets are the classes: the inputs are the target features and historical coverages, and the output is a distribution over the predicted attack. Specifically, the defender fits a function $\hat{\bm{q}}$ to the training data that minimizes a loss function. Using the cross entropy, the loss for a particular training example is
\begin{small}
\begin{align}
\mathcal{L}(\hat{q}(\bm{y}, \bm{p_{\textrm{historical}}}), \mathcal{A}) = -\sum_{i \in T} \tilde{\bm{q}} \log(\hat{\bm{q}}_i(\bm{y}, \bm{p_{\textrm{historical}}})),
\end{align}
\end{small}
where $\tilde{\bm{q}}=\frac{\mathcal{A}_i}{ |\mathcal{A}|}$  is the \emph{empirical attack distribution} and $\mathcal{A}_i$ is the number of historical attacks that were observed on target $i$. Note that we use hats to indicate model outputs and tildes to indicate the ground truth. For each test game $\langle \mathcal{T}, \bm{y}, \bm{u}_d \rangle$, coverage is selected by maximizing the defender's expected utility assuming the attack function is $\hat{\bm{q}}$:
\begin{small}
\begin{align}
\max_{\bm{x} \textrm{ satisfying } C_d}\textsc{DEU}(\bm{x}(\mathcal{T}, \bm{y}, \bm{u}_d); \hat{\bm{q}}).
\end{align}
\end{small}

\begin{figure}
\centering
\includegraphics[width=0.95\linewidth]{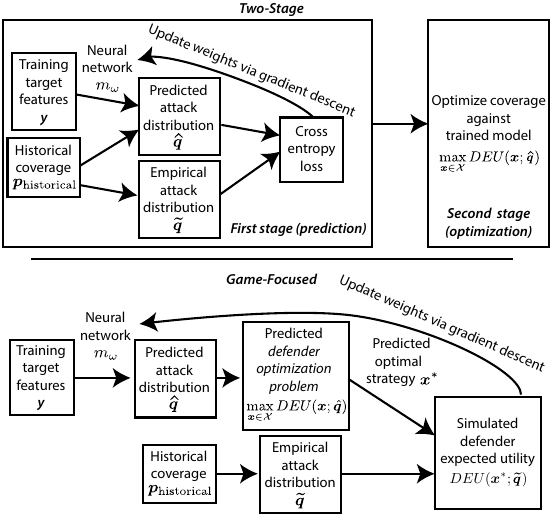}
\caption{Comparison between a standard two-stage approach to training an adversary model and our game-focused approach.}
\label{fig:decisionfocused}
\end{figure}

\section{Impact of Two-Stage Learning on DEU}
\label{sec:twostageimpact}
We begin by developing intuitions about when an inaccurate predictive model can lead to high defender expected utility. We study the rational attacker case for simplicity---results in the rational case can be directly translated to the QR case (which is a smooth version of rationality). Consider an SSG with three targets and a single defense resource. The defender has equal value for all three and the attacker has true values of $(0.4, 0.4, 0.2)$, yielding an optimal coverage of $p^*=(0.5, 0.5, 0.0)$. Suppose the defender estimates the attacker's target values to be $(0.5, 0.5, 0.0)$. This estimate yields the optimal coverage, despite overestimating the value of the first two targets by 25\% and underestimating the value of the third by 100\%. In contrast, the estimate $(0.4-\epsilon, 0.4-\epsilon, 0.2+2\epsilon)$ does not yield optimal coverage despite being within $\epsilon$ of the ground truth target values.

We characterize the extent to which two predictive models with the same accuracy-based loss can differ in terms of the defender's expected utility for rational attacker, two-target SSGs with both equal and zero-sum defender target values. From the perspective of a two-stage approach with an accuracy-based loss, any two models with the same loss are considered equally good. \emph{In contrast, a game-focused model with an oracle for the defender's expected utility would automatically prefer a model with higher defender utility.} We additionally extend the latter result to QR attackers.

The theory shows two key points. 
First, the error in estimates of attacker's utilities can have highly variable effects on the defender's expected utility. As we saw in the example, estimation error can have no effect in certain cases. The defender's preference for the distribution of estimation error depends on both the relative values of the targets and the correlation between the target values of the attacker and defender. These properties are challenging to replicate in hand-tuned two-stage approaches. Second, game-focused learning is more beneficial when the attacker's true values across targets exhibit greater variance. We return to this intuition in our experiments.

We begin with the case where the defender values all targets equally (and recall that we assume that both the attacker and defender receive a payoff of zero for an unsuccessful attack). For complete proofs of all theorems, see the full version of the paper.

\begin{thm}[Equal defender values]\label{thm:twotargetsequal}
Consider a two-target SSG with a rational attacker, equal defender values for each target, and a single defense resource to allocate, which is not subject to scheduling constraints (i.e., any nonnegative marginal coverage that sums to one is feasible). Let $z_0 \geq z_1$ be the attacker's values for the targets, which are observed by the attacker, but not the defender, and we assume w.l.o.g.\ are non-negative and sum to 1. Let the defender's values for the targets be -1 for each.

The defender has an estimate of the attacker's values $(\hat{z}_0, \hat{z}_1)$ with \emph{mean squared error (MSE)} $\epsilon^2$. Suppose the defender optimizes coverage against this estimate. If $\epsilon^2 \leq (1 - z_0)^2$ and $\epsilon^2 \leq (z_0-z_1)^2$, the ratio between the highest $DEU$ under the estimate of $(\hat{z}_0, \hat{z}_1)$ with MSE $\epsilon^2$ and the lowest $DEU$ is:
\begin{small}
\begin{align}
\frac{z_0 + \epsilon}{z_1 + \epsilon}\label{expr:ratev}
\end{align}
\end{small}
\end{thm}
\begin{proofsketch}
There are two normalized estimates of the attacker's values that have MSE $\epsilon^2$: $(z_0+\epsilon, z_1-\epsilon)$ and $(z_0-\epsilon, z_1+\epsilon)$. The attacker will attack the target whose value the defender underestimates. The defender prefers the latter case, where the attacker selects the higher value target, because this target has more coverage and successful attacks have the same cost on both targets.
\end{proofsketch}

Thus, in the equal value case, it is generally better for the defender to underestimate the attacker's values for high-value targets. This dynamic is reversed in the zero-sum case.

\begin{thm}[Zero-sum]
\label{thm:twotargetzerosum}
Consider the same setting as Thm.~\ref{thm:twotargetsequal} except the utilities are zero-sum. If $\epsilon^2 \leq (1 - z_0)^2$, the ratio between the highest $DEU$ under the estimate of $(\hat{z}_0, \hat{z}_1)$ with MSE $\epsilon^2$ and the lowest $DEU$ is:
\begin{small}
\begin{align}
\frac{ (1-(z_1-\epsilon))z_1 }{ (1-(z_0-\epsilon))z_0 }\label{expr:ratzs}
\end{align}
\end{small}
\end{thm}
\begin{proofsketch}
Similarly to Thm.~\ref{thm:twotargetsequal}, there are two value estimates with MSE $\epsilon^2$. The defender prefers the case where she underestimates the attacker's value for the lower value target, inducing the attacker to attack it. The lower cost of failures outweighs the attacker getting caught less often.
\end{proofsketch}

The theory can be extended to QR attackers. In the case of Thm.~\ref{thm:twotargetzerosum}, the defender can lose value $z_0 \epsilon$, or $\epsilon$ as $z_0 \to 1$, compared to the optimum because of an unfavorable distribution of estimation error. We show that this carries over to a boundedly rational QR attacker, with the degree of loss converging towards the rational case as $\lambda$ increases. 
\begin{thm}[Zero-sum, QR attacker]
Consider the setting of Thm.~\ref{thm:twotargetzerosum}, but in the case of a QR attacker. For any $0 \leq \alpha \leq 1$, if $\lambda \geq \frac{2}{(1- \alpha)\epsilon} \log \frac{1}{(1 - \alpha) \epsilon}$, the defender's loss compared to the optimum may be as much as $\alpha (1- \epsilon) \epsilon$ under a target value estimate with MSE $\epsilon^2$.
\end{thm}


\section{Game-Focused Learning in SSGs}
\label{sec:decision}
We now present our approach to game-focused learning in SSGs. The key idea is to embed the defender optimization problem into training and compute gradients of $\textsc{DEU}$ with respect to the model's predictions, which requires us to overcome two technical challenges. First, in the previous section, we assumed we had access to an exact oracle for the defender's expected utility, but in practice, this is a counterfactual estimation problem. Second, our defender's optimization is nonconvex and new machinery is required to calculate the derivative of the solution w.r.t.\ its parameters. We illustrate our approach in the bottom of Fig.~\ref{fig:decisionfocused}.

We begin with notation. As we have discussed, the standard two-stage approach may fall short when the loss function (e.g., cross entropy) does not align with the true goal of maximizing expected utility. Ultimately, the defender would like to learn a function $m_\omega$ which takes a set of targets and associated features as input and produces $\hat{\bm{q}}$ as output, which then induces a coverage with high expected utility. Note that from a utility-theoretic perspective, it does not matter how accurate $\hat{\bm{q}}$ is, only that the induced coverage has high expected utility. Let
\begin{small}
\begin{align}\label{eq:xstar}
    \bm{x^*}(\hat{\bm{q}}) =  \mathop{\arg\max}_{\bm{x} \textrm{ satisfying } C_d}\textsc{DEU}(\bm{x}; \hat{\bm{q}})
\end{align}
\end{small}be the optimal defender coverage function against an adversary with attack function $\hat{\bm{q}}$. Our goal is to find $\hat{\bm{q}}$ which maximizes
\begin{small}
\begin{align}\label{eq:decision-objective}
    \textsc{DEU}(\hat{\bm{q}}) = \mathop{\mathbb{E}}_{\langle \mathcal{T}, \bm{y}, \bm{u}_d\rangle \sim D_{\textrm{test}}}\left[\textsc{DEU}(\bm{x^*}(\hat{\bm{q}}); \bm{q}) \right],
\end{align}
\end{small}$\textsc{DEU}(\hat{\bm{q}})$ is the ground truth expected utility of coverage $\bm{x^*}(\hat{\bm{q}})$ (recall that $\bm{q}$ is the attacker's true response function). While we do not have access to $D_{\textrm{test}}$, we can estimate Expr.~\ref{eq:decision-objective} using samples from $D_{\textrm{train}}$.
We would like to calculate the derivative of Expr~\ref{eq:decision-objective} w.r.t.\ $\hat{\bm{q}}$ to use in model training. Using the chain rule:


\begin{small}
\begin{align*}
    \frac{\partial \textsc{DEU}(\hat{\bm{q}})}{\partial \hat{\bm{q}}} = \mathop{\mathbb{E}}_{\langle \mathcal{T}, \bm{y}, \bm{u}_d\rangle \sim D_{\textrm{train}}}\left[ \frac{\partial \textsc{DEU}(\bm{x^*}(\hat{\bm{q}}); \bm{q})}{\partial \bm{x^*}(\hat{\bm{q}})}\frac{\partial \bm{x^*}(\hat{\bm{q}}) }{\partial \hat{\bm{q}}} \right].
\end{align*}
\end{small}
Here, $\frac{\partial \textsc{DEU}(\bm{x^*}(\hat{\bm{q}}); \bm{q})}{\partial \bm{x^*}(\hat{\bm{q}})}$ describes how the defender's true utility with respect to $\bm{q}$ changes as a function of her strategy $\bm{x}^*$, which is a \emph{counterfactual} question because we only observe the defender playing a single strategy in this training game. $\frac{\partial \bm{x^*}(\hat{\bm{q}}) }{\partial \hat{\bm{q}}}$ describes how $\bm{x^*}$ depends on the estimated attack function $\hat{\bm{q}}$, which requires differentiating through the nonconvex optimization problem in Eq.~\ref{eq:xstar}.
If we had a means of calculating both terms, we could then estimate $\frac{\partial \textsc{DEU}(\hat{\bm{q}})}{\partial \hat{\bm{q}}}$ by sampling games from $D_{\textrm{train}}$ and computing gradients on the samples. If $\hat{\bm{q}}$ is itself implemented in a differentiable manner (e.g., a neural network), the entire system may be trained end-to-end via gradient descent. We address each of the two terms separately.

\subsection{Counterfactual Adversary Estimates}
We want to calculate $\frac{\partial \textsc{DEU}(\bm{x^*}(\hat{\bm{q}}); \bm{q})}{\partial \bm{x^*}(\hat{\bm{q}})}$ which describes how the defender's \emph{true} utility with respect to $\bm{q}$ depends on her strategy $\bm{x}^*$. Computing this term requires a \emph{counterfactual} estimate of how the attacker would react to a different coverage vector than the historical one. We find that typical datasets only contain a set of sampled attacker responses to a particular historical defender mixed strategy or a small set of mixed strategies. Previous work on end-to-end learning for decision problems \cite{Bengio1997UsingAF,donti2017task,wilder2019melding,Demirovic2019PredictOptimiseWR} assumes that the historical data specifies the utility of \emph{any} possible decision, but this assumption does not hold in SSGs because they are interactions between strategic agents.

Our approach relies on the adversary using a bounded rationality model that is stochastic and \emph{decomposable}. It is generally the case that boundedly rational adversaries complicate the process of learning and optimizing in SSGs, e.g., because they cause the optimization to become nonconvex and they add uncertainty to the defender's adversary model. However, bounded rationality is critical to our counterfactual reasoning strategy because boundedly rational adversaries reveal information about their entire ranking of targets over repeated games against the same defender strategy. For example, consider a three-target game where the defender has covered all three targets equally. QR attackers attack each target proportionally to the expected utility it provides, eventually revealing the attacker's relative utilities across all of the targets under that particular defender coverage. Without the stochasticity, we would unable to learn anything other than the attacker's most preferred target.

The resulting target value estimates are in the context of one particular defender strategy. To estimate the attacker's response to \emph{any} defender coverage, we need to substitute the historical coverage for an arbitrary one. At first glance, this may seem impossible for a stochastic, bounded rationality model because the attacker could have an arbitrary response to coverage. If we had a rational attacker instead, with known target values, we could compute his reaction to an arbitrary defender coverage, but we could not estimate his relative values for each target (as previously discussed). Here we exploit the decomposability of many bounded rationality models: the impact of the defender's coverage can be separated from the values of the targets. 

We develop an illustrative example of the pipeline for SUQR. We observe samples from the attack distribution $\bm{q}$, where for SUQR, $\bm{q}_i \propto \exp(w\bm{p}_i+ \phi(\bm{y}_i))$. Because we can estimate $\bm{q}_i$ from the empirical attack frequencies and the term $w\bm{p}_i$ is known (see Sec.~\ref{sec:setting}), we can invert the $\exp$ function to obtain an estimate of $\phi(\bm{y}_i)$. Formally, this corresponds to setting $\hat{\phi}(\bm{y}_i)$ to the MLE under the empirical attack distribution:
\begin{small}
\begin{align*}
\hat{\phi}(\bm{y}_i) = \argmax_{\bm{\phi}} \prod_{a \in \mathcal{A}} \Pr(\textrm{Categorical}(\exp(w\bm{p}_i + \bm{\phi})) = a).
\end{align*}
\end{small}
By exploiting decomposability, we derive relative target value estimates that can be used to estimate the attacker's behavior under an arbitrary coverage.
Our estimates have two key limitations. First, they do not provide us with any information about the $\phi$ function for values other than $\bm{y}_i$ and second, they are unique only up to a constant additive factor. Despite these limitations, they suffice to allow us to simulate the defender's expected utility for any training data point $\langle \mathcal{T}, \bm{y}, \mathcal{A}, \bm{u}_d, \bm{p}_{\textrm{historical}} \rangle$ as
\begin{small}
\begin{align}
\sum_{i \in T} (1-\bm{x}^*(\hat{\bm{q}})_i)\exp(w \bm{x}^*(\hat{\bm{q}})_i + \hat{\phi}(\bm{y}_i))\bm{u}_d(i).
\end{align}
\end{small}
We briefly discuss two issues that arise when applying this procedure to other bounded rationality models. First, the model needs to provide meaningful $\hat{\phi}$ estimates, which is where the rational attacker model fails. Second, the model needs to be decomposable into the effects of coverage and the inherent attractiveness of the targets, and the parameters of this decomposition need to be easily estimable (as we show is the case for SUQR in Sec.~\ref{sec:setting}). Most models satisfy this condition, including SHARP~\cite{KAR201665}, PT and QBRM~\cite{abbasi2015human}.

\subsection{Gradients of Nonconvex Optimization}
The optimization problem which produces $\bm{x}^*(\hat{\bm{q}})$ is typically nonconvex when the adversary is boundedly rational. This complicates the process of differentiating through the defender problem to obtain $\frac{\partial \bm{x^*}(\hat{\bm{q}}) }{\partial \hat{\bm{q}}}$, as previous approaches rely on either a convex optimization problem \cite{donti2017task} or a cleverly chosen convex surrogate for a nonconvex problem \cite{wilder2019melding}. In contrast, our approach produces correct gradients for many nonconvex problems. The key idea is to fit a quadratic program around the optimal point returned by a blackbox nonconvex solver. Intuitively, this works well when the local neighborhood is, in fact, convex, and fortunately, this is the case for many optimization problems against boundedly rational attackers.


Specifically, we consider the generic problem $\min_{\bm{x} \in \mathcal{X}} f(\bm{x}, \theta)$ where $f$ is a (potentially nonconvex) objective which depends on a learned parameter $\theta$. $\mathcal{X}$ is a feasible set that is representable as $\{x: g_1(\bm{x}),\ldots,g_m(\bm{x}) \leq 0, h_1(\bm{x}),\ldots,h_\ell(\bm{x}) = 0\}$ for some convex functions $g_1,\ldots,g_m$ and affine functions $h_1,\ldots,h_\ell$. We assume there exists some $\bm{x} \in \mathcal{X}$ with $\bm{g}(\bm{x}) < 0$, where $\bm{g}$ is the vector of constraints. In SSGs, $f$ is the defender objective $\textsc{DEU}$, $\theta$ is the attack function $\hat{\bm{q}}$, and $\mathcal{X}$ is the set of $\bm{x}$ satisfying $C_d$. We assume that $f$ is twice continuously differentiable. These two assumptions capture smooth nonconvex problems over a nondegenerate convex feasible set. 

Suppose that we can obtain a local optimum of $f$. Formally, we say that $\bm{x}$ is a \textit{strict local minimizer} of $f$ if (1) there exist $\bm{\mu} \in R^m_+$ and $\bm{\nu} \in R^\ell$ such that $\nabla_{\bm{x}} f(\bm{x}, \theta) + \bm{\mu}^\top \nabla \bm{g}(\bm{x}) + \bm{\nu}^\top \nabla \bm{h}(\bm{x}) = 0$ and $\bm{\mu} \odot \bm{g}(\bm{x}) = 0$ and (2) $\nabla^2 f(\bm{x}, \theta) \prec 0$. Intuitively, the first condition is first-order stationarity, where $\bm{\mu}$ and $\bm{\nu}$ are dual multipliers for the constraints, while the second condition says that the objective is strictly convex at $\bm{x}$ (i.e., we have a strict local minimum, not a plateau or saddle point). We prove the following:  

\begin{thm} \label{theorem:gradients}
Let $\bm{x}$ be a strict local minimizer of $f$ over $\mathcal{X}$. Then, except on a measure zero set, there exists a convex set $\mathcal{I}$ around $\bm{x}$ such that $\bm{x}_{\mathcal{I}}^*(\theta) = \arg\min_{x \in \mathcal{I} \cap \mathcal{X}} f(\bm{x}, \theta)$ is differentiable. The gradients of $\bm{x}_{\mathcal{I}}^*(\theta)$ with respect to $\theta$ are given by the gradients of solutions to the local quadratic approximation $\min_{\bm{x} \in \mathcal{X}} \frac{1}{2}\bm{x}^\top \nabla^2 f(\bm{x}, \theta) \bm{x} + \bm{x}^\top \nabla f(\bm{x}, \theta)$.
\end{thm}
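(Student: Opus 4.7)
The overall strategy is to reduce the nonconvex problem to a local convex one on a small neighborhood of $\bm{x}$, apply the implicit function theorem (IFT) to its KKT system, and then observe that the IFT only involves first and second derivatives of $f$ evaluated at $\bm{x}$, which are identical to those of the quadratic model. I am treating the paper's $\nabla^2 f \prec 0$ as a typo for $\succ 0$ (otherwise $\bm{x}$ cannot be a strict local minimizer).

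First, I would use the continuity of $\nabla^2 f(\cdot,\theta)$ together with strict positive definiteness at $\bm{x}$ to pick a closed Euclidean ball $\mathcal{I}$ around $\bm{x}$ on which $\nabla^2 f(\cdot,\theta) \succ 0$. Because $\mathcal{X}$ is described by convex $g_i$ and affine $h_j$, the set $\mathcal{I} \cap \mathcal{X}$ is convex and the restricted problem $\min_{\bm{x}\in \mathcal{I}\cap\mathcal{X}} f(\bm{x},\theta)$ is strictly convex. Slater's condition (inherited from the assumed strict feasibility of $\mathcal{X}$, after possibly shrinking $\mathcal{I}$) gives strong duality, and the unique minimizer of the restricted problem coincides with the stationary point $\bm{x}$ identified by the solver. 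This defines $\bm{x}^*_{\mathcal{I}}(\theta)$ unambiguously in a neighborhood of the nominal $\theta$.

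Next, I would write the KKT system
\begin{equation*}
\nabla_{\bm{x}} f(\bm{x},\theta) + \bm{\mu}^\top \nabla \bm{g}(\bm{x}) + \bm{\nu}^\top \nabla \bm{h}(\bm{x}) = 0,\qquad \bm{\mu}\odot \bm{g}(\bm{x}) = 0,\qquad \bm{h}(\bm{x}) = 0,
\end{equation*}
and apply the IFT to these equations jointly in $(\bm{x},\bm{\mu},\bm{\nu})$. The Jacobian of the system with respect to $(\bm{x},\bm{\mu},\bm{\nu})$ involves exactly $\nabla^2 f(\bm{x},\theta)$, the Jacobians of active constraints, and the multipliers; its invertibility follows from strict positive definiteness of $\nabla^2 f$ on the tangent space of active constraints plus linear independence of active constraint gradients (LICQ), both of which hold generically. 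The measure-zero exclusion is exactly the set of $\theta$ for which strict complementarity fails (some $\mu_i = 0$ while $g_i(\bm{x}) = 0$), since at such parameters the active set can change and $\bm{x}^*_{\mathcal{I}}$ may only be one-sidedly differentiable; by a standard Sard-type / semialgebraic genericity argument this bad set has Lebesgue measure zero. Away from it, the IFT yields a closed-form expression for $\partial \bm{x}^*_{\mathcal{I}}/\partial \theta$ in terms of $\nabla^2 f(\bm{x},\theta)$, $\nabla_{\theta}\nabla_{\bm{x}} f(\bm{x},\theta)$, and the active constraint Jacobians.

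Finally, I would observe the key identification: the quadratic surrogate $\tilde f(\bm{x},\theta) = \tfrac12 \bm{x}^\top \nabla^2 f(\bm{x},\theta)\bm{x} + \bm{x}^\top \nabla f(\bm{x},\theta)$ has, by construction, the same gradient and Hessian at $\bm{x}$ as $f$ itself, and the same feasible set $\mathcal{X}$. Its KKT system therefore has the identical Jacobian entries at the solution, so the IFT formula is bit-for-bit the same. Hence differentiating through the convex QP (for which existing OptNet-style machinery applies) returns precisely $\partial \bm{x}^*_{\mathcal{I}}/\partial \theta$. The main obstacle I anticipate is the measure-zero clause: one must argue carefully that the set of $\theta$ where LICQ or strict complementarity fails is indeed negligible, which requires invoking genericity of KKT regularity rather than properties of any single $\theta$; the rest (choice of $\mathcal{I}$, IFT, matching with the quadratic surrogate) is then mechanical.
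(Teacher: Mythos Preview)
Your proposal is correct and follows essentially the same route as the paper: use continuity of $\nabla^2 f$ to carve out a neighborhood $\mathcal{I}$ on which the restricted problem is convex, invoke Slater/KKT for a necessary and sufficient characterization of the local minimizer, and then observe that the KKT system depends only on first- and second-order data, which coincide with those of the quadratic surrogate over $\mathcal{X}$. The paper differs only in two minor packaging choices: it isolates the ``Slater's condition survives intersection with a ball'' step as a separate lemma (rather than your ``after possibly shrinking $\mathcal{I}$''), and it outsources the measure-zero claim directly to Theorem~1 of Amos and Kolter (2017) on differentiability of QP solutions, rather than arguing genericity of strict complementarity/LICQ by hand as you sketch. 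Your identification of the $\prec 0$ sign as a typo for $\succ 0$ is also correct.
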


This states that the local minimizer within the region output by the nonconvex solver varies smoothly with $\theta$, and we can obtain gradients of it by applying existing techniques \cite{amos2017optnet} to the local quadratic approximation. It is easy to verify that the defender utility maximization problem for an SUQR attacker satisfies the assumptions of Thm.~\ref{theorem:gradients} since the objective is smooth and typical constraint sets for SSGs are polytopes with nonempty interior (see \cite{xu2016mysteries} for a list of examples). Our approach is quite general and applies to a range of behavioral models such as QR, SUQR, and SHARP since the defender optimization problem remains smooth in all.

\section{Experiments}

\begin{figure*}
\centering
\begin{tabular*}{\textwidth}{@{\extracolsep{\fill}} cccc}
\multirow{8}{*}{
    \begin{subfigure}{0.12\textwidth}
    \centering
    8 Targets \\[3pt]
   \includegraphics[width=0.9\linewidth]{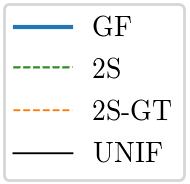}\\
   24 Targets
   \end{subfigure}
}&
\begin{subfigure}{0.24\textwidth}
\centering
   \includegraphics[width=\linewidth]{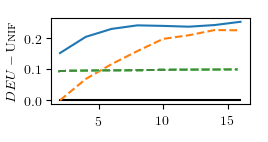}
\end{subfigure}&
\begin{subfigure}{0.24\textwidth}
\centering
   \includegraphics[width=\linewidth]{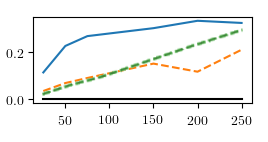}
\end{subfigure}&
\begin{subfigure}{0.24\textwidth}
   \centering
   \includegraphics[width=\linewidth]{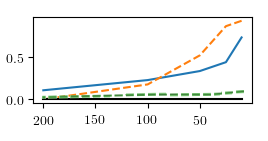}
\end{subfigure}\\
&
\begin{subfigure}{0.24\textwidth}
\centering
   \includegraphics[width=\linewidth]{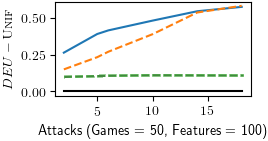}
\end{subfigure}
&
\begin{subfigure}{0.24\textwidth}
\centering
   \includegraphics[width=\linewidth]{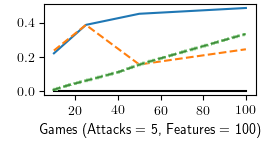}
\end{subfigure}
&\begin{subfigure}{0.24\textwidth}
   \centering
   \includegraphics[width=\linewidth]{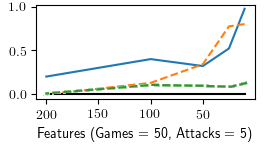}
\end{subfigure}
\\
\end{tabular*}
\caption{\label{fig:sims} $\textsc{DEU} - \textsc{Unif}$ across the three strategies as we vary the number of features, number of training games and number of observed attacks per training game. When not varied, the parameter values are 100 features, 50 training games and 5 attacks per game. \textsc{GF} receives higher $\textsc{DEU}$ than \textsc{2S} for most parameter values.}
\end{figure*}

\label{sec:experiments}
We begin by comparing the performance of game-focused and two-stage approaches across a range of settings both simulated and real. We find that game-focused learning outperforms two-stage when the number of training games is low, the number of attacks observed in each training game is low, and the number of target features is high. As the amount of training data increases, two-stage starts catching up as it is able to reconstruct the attacker model accurately. We dedicate the second part of the experiments section to investigating three hypotheses for why game-focused achieves superior performance. 

\paragraph{Defender Strategies.} We compare the following three defender strategies: \emph{Uniform attacker values ($\textsc{Unif}$)} is a baseline where the defender assumes that the attacker's value for all targets is equal and maximizes $\textsc{DEU}$ under that assumption. \emph{Game-focused (\textsc{GF})} is our game-focused approach.  \emph{Two-stage ($\textsc{2S}$)} is a standard two-stage approach, where a neural network is fit to predict attacks, minimizing cross-entropy on the training data. \emph{Game-tuned two-stage (2S-GT)} is a regularized approach that aims to maximize the defender's expected utility when the amount of data is small. It uses Dropout~\cite{srivastavaetal:JMLR14} and a validation set for early stopping. All three methods use the same architecture for the prediction neural network: a fully-connected single-layer network with 200 hidden units on the synthetic data and 10 hidden units on the simpler human subject data.

\begin{figure*}
\centering
\begin{subfigure}{0.23\textwidth}
\centering
   \includegraphics[width=0.98\linewidth]{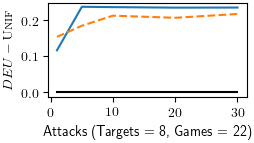}
\end{subfigure}
\begin{subfigure}{0.23\textwidth}
   \centering
   \includegraphics[width=0.98\linewidth,]{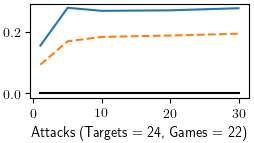}
\end{subfigure}
\begin{subfigure}{0.23\textwidth}
\centering
   \includegraphics[width=0.98\linewidth]{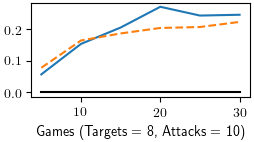}
\end{subfigure}
\begin{subfigure}{0.23\textwidth}
   \centering
   \includegraphics[width=0.98\linewidth]{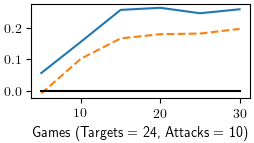}
\end{subfigure}
\caption{\label{fig:humansubject}
$\textsc{DEU} - \textsc{Unif}$ from human subject data for 8 and 24 targets, as the number of attacks per training game is varied and number of training games is varied. \textsc{GF} receives higher $\textsc{DEU}$ for most settings, especially for 24-target games.}
\end{figure*}

\subsection{Experiments in Simulation}
We perform experiments against an attacker with an SUQR target attractiveness function. Raw features values are sampled i.i.d. from the uniform distribution over [-10, 10]. Because it is necessary that the attacker target value function is a function of the features, we sample the attacker and defender target value functions by generating a random neural network for the attacker and defender. Our other parameter settings are chosen to align with Nguyen et al.'s~\shortcite{nguyen2013analyzing} human subject data. We rescale defender values to be between -10 and 0.

We choose instance parameters to illustrate the differences in performance between decision-focused and two-stage approaches. We run 28 trials per parameter combination. Unless it is varied in an experiment, the parameters are:
\begin{enumerate}
\denselist
\item \emph{Number of targets} $=|\mathcal{T}| \in \{8, 24\}$.
\item \emph{Features per target} $=|\bm{y}|/|\mathcal{T}|=100$.
\item \emph{Number of training games} $= |D_\textrm{train}|=50$. We fix the number of test games $=|D_\textrm{test}|=50$.
\item \emph{Number of attacks per training game} $=|\mathcal{A}|=5$.
\item \emph{Defender resources} is the number of defense resources available. We use 3 for 8 targets and 9 for 24.
\item We fix the attacker's weight on defender coverage to be $w=-4$ (see Eq.~\ref{eq:suqr}), a value chosen because of its resemblance to observed attacker $w$ in human subject experiments \cite{nguyen2013analyzing,yang2014adaptive}. All strategies receive access to this value, which would require the defender to vary her mixed strategies to learn.
\item \emph{Historical coverage} $=\bm{p}_{\textrm{historical}}$ is the coverage generated by \textsc{Unif}, which is fixed for each training game.
\end{enumerate}

\paragraph{Results}
Fig.~\ref{fig:sims} shows the results of the experiments in simulation, comparing the defender strategies across a variety of problem types.
\textsc{GF} yields higher DEU than the other methods across most tested parameter settings and \textsc{GF} especially excels in problems where learning is more difficult---more features, fewer training games, and fewer attacks.

The vertical axis of each graph is median DEU minus the DEU achieved by \textsc{Unif}. Because \textsc{Unif} does not perform learning, its DEU is unaffected by the horizontal axis parameter variation, which only affects the difficulty of the learning problem, not the difficulty of the game. The average $\textsc{DEU}(\textsc{Unif})=-2.5$ for 8 targets and $\textsc{DEU}(\textsc{Unif})=-4.2$ for 24.

The left column of Fig.~\ref{fig:sims} compares DEU as the number of attacks observed per game increases. For both 8 and 24 targets, \textsc{GF} receives higher DEU than the other methods across the tested range. We provide the results of paired sample T-test between GF and 2S-GT in the appendix, which shows that the differences are statistically significant at $p<0.05$. 
The center column of Fig.~\ref{fig:sims} compares DEU as the number of training games increases. Likewise, GF outperforms the other methods.
The right column of Fig.~\ref{fig:sims} compares DEU as the number of features decreases. Here we see GF outperforming 
2S-GT when the number of features is large (and thus, the learning problem is harder) and vice versa when the number of features is small.

In all three columns, 2S-GT catches up to GF as the learning problem becomes easier. With enough data, the gap between the two methods closes as both approach optimality. This fact is reflected by Thms.~\ref{thm:twotargetsequal} and~\ref{thm:twotargetzerosum}: as the model error $\epsilon$ decreases, the DEU difference between the best and worst model with error $\epsilon$ decreases. We observe a standard relationship between 2S and 2S-GT: the model that is tuned for small data mostly performs better. The untuned 2S model benefits from increasing the number of training games, but increasing the number of attacks or decreasing the number of features has little effect.


\subsection{Experiments on Human Subject Data}
We use data from human subject experiments performed by Nguyen et al.~\shortcite{nguyen2013analyzing}. The data consists of an 8-target setting with 3 defender resources and a 24-target setting with 9. Each setting has 44 games. Historical coverage is the optimal coverage assuming a QR attacker with $\lambda=1$. For each game, 30-45 attacks by human subjects are recorded.

We use the attacker coverage parameter $w$ calculated by Nguyen et al.~\shortcite{nguyen2013analyzing}: $-8.23$. We use MLE to calculate the ground truth target values for the test games. There are four features for each target: attacker's reward and defender's penalty for a successful attack, attacker's penalty and defender's reward for a failed attack.

\paragraph{Results} We find that \textsc{GF} receives higher DEU than \textsc{2S-GT} on the human subject data (as 2s-GT outperformed 2S in the synthetic data case, we do not present results for 2s here). The differences are statistically significant at $p<0.05$ in the 24-target case and at the border of significance in the 8-target case. Fig.~\ref{fig:humansubject} summarizes our results as the number of training attacks per target and games are varied. 
Varying the number of attacks, for 8 targets, \textsc{GF} achieves its highest percentage improvement in DEU at 5 attacks where it receives 28\% more than \textsc{2S-GT}. For 24 targets, \textsc{GF} achieves its largest improvement of 66\% more DEU than \textsc{2S} at 1 attack. Varying the number of games, \textsc{GF} outperforms \textsc{2S-GT} except for fewer than 10 training games in the 8-target case. The percentage advantage is greatest for 8-target games at 20 training games (33\%) and at 2 training games for 24-target games, where \textsc{2S-GT} barely outperforms \textsc{Unif}.

Unlike in the synthetic data experiments, we do not observe 2S-GT catching up to GF using the data that we allocate for training. A key difference between the human subject data experiments and the synthetic data experiments is the presence of noise in the former. In the latter, there exists a ground-truth attractiveness function that, if learned, would reproduce the attack distribution exactly. With human subject data, we do not expect this to be the case: there are other features that are not available to the model such as the position of the target on the screen and learning effects.


\subsection{Discussion}
Our experimental results establish that GF outperforms the two-stage approaches under a variety of instance parameter settings. We now focus on understanding why GF produces predictions that lead to superior defender utility. We study three hypotheses. We test our hypothesis with 24 targets, 100 features, 100 games, and 5 attacks unless specified otherwise.

\paragraph{Hypothesis 1: GF makes better predictions.} A natural starting point is whether the differences in performance can be explained purely by the quality of predictions. This is the standard position in the literature---better predictive adversary models lead to higher defender expected utility. It would be surprising if this hypothesis were true because GF does not explicitly optimize for prediction accuracy and two-stage does.

The human subject experiments produce strong evidence against this hypothesis. Even when GF has a large advantage in defender expected utility, it has test cross entropy that is 2--20\% higher than 2s-GT.

\paragraph{Hypothesis 2: GF handles model uncertainty better.} From a Bayesian perspective, the training data induce a posterior distribution over the potential adversary attractiveness functions. Thus, the defender's ideal optimization, i.e., the one that yields the highest expected utility, is stochastic over this distribution of attackers. Because GF is an end-to-end approach, it may handle the uncertainty over attacker models better by learning to represent the distribution as a point estimate that induces the correct solution. We test this hypothesis by using the 2S test cross entropy as a surrogate for the uncertainty in the attacker model. Low cross entropy indicates that the model learned by 2S was close to the true model, and this indicates that there is little advantage to taking model uncertainty into account in the optimization. We would hypothesize that GF would be weaker in comparison when this occurs and stronger when 2S cross entropy is higher.

The results are shown on the left side of Fig.~\ref{fig:bytrial}. The $x$-axis shows 2S test cross entropy and the $y$-axis is the gap between GF and 2S. This hypothesis fails: when there is less model uncertainty, GF performs better relative to 2S.

\paragraph{Hypothesis 3: GF learns more accurate models for more important targets.} The different loss function used by GF may induce a different distribution of the errors across targets. Because errors on more important targets have a greater impact on the defender's expected utility, we hypothesize that GF will make smaller errors on important targets and larger errors on unimportant ones relative to 2S.

Fig.~\ref{fig:bytarget} supports this hypothesis. The $x$-axis is the target's predicted contribution to DEU under the coverage selected by the defender strategy, and the $y$-axis shows the absolute error in the predicted probability that the attacker attacks that target. GF has larger errors for targets that contribute less to DEU and smaller errors for targets that contribute more.


\begin{figure}
\centering
\begin{subfigure}{0.22\textwidth}
\centering
\includegraphics[width=1.0\linewidth]{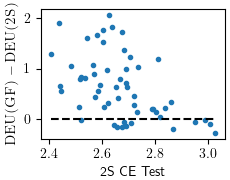}
\end{subfigure}
\hspace{0.01\textwidth}
\begin{subfigure}{0.22\textwidth}
\centering
\includegraphics[width=1.0\linewidth]{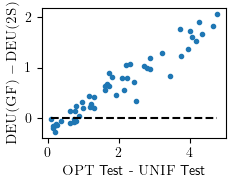}
\end{subfigure}
\caption{$\textsc{DEU(GF)}-\textsc{DEU(2S)}$. Each point represents one trial. GF performs better when 2S has lower test cross entropy and when target values are less uniform.}
\label{fig:bytrial}
\end{figure}

\begin{figure}
\centering
\begin{subfigure}{0.22\textwidth}
\centering
\includegraphics[width=1.0\linewidth]{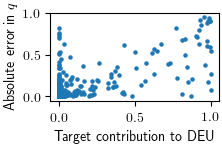}
\end{subfigure}
\hspace{0.01\textwidth}
\begin{subfigure}{0.22\textwidth}
\centering
\includegraphics[width=1.0\linewidth]{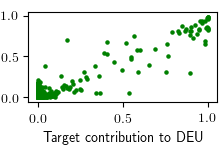}
\end{subfigure}
\caption{Target contribution to DEU vs.\ the absolute error in the predicted attacker $q$. GF (left) has lower estimation errors for targets with high DEU contributions and higher errors for targets with low DEU contributions. 2S (right) estimation errors do not vary with target importance.}
\label{fig:bytarget}
\end{figure}

\section{Conclusion}

We advance the state of the art in learning adversary models in SSGs with the goal of maximizing defender expected utility. In contrast to past approaches, our approach allows modern deep learning architectures to be trained, and we outperform even two-stage approaches that have been tuned to maximize the defender's expected utility. We investigate empirically and theoretically why our game-focused approach outperforms two-stage and find that harder decision problems lead to better game-focused performance. We believe that our conclusions have important consequences for future research and that our game-focused approach can be extended to a variety of SSG models where smooth nonconvex objectives and polytope feasible regions are common.

\paragraph{Acknowledgments.} Perrault was supported by the Center for Research on Computation and Society. This work was supported by the
Army Research Office (MURI W911NF1810208).





\fontsize{9.0pt}{10.0pt}
\bibliographystyle{aaai}
\bibliography{standard}

\appendix
\section{Section 3 Theorems}
\setcounter{thm}{0}
\begin{thm}[Equal defender values]\label{thm:twotargetsequal}
Consider a two-target SSG with a rational attacker, equal defender values for each target, and a single defense resource to allocate, which is not subject to scheduling constraints (i.e., any nonnegative marginal coverage that sums to one is feasible). Let $z_0$ and $z_1$ (w.l.o.g., we assume $z_0 \geq z_1$) be the attacker's values for the targets, which are observed by the attacker, but not the defender, and we assume w.l.o.g.\ are non-negative and sum to 1.

The defender has an estimate of the attacker's values $(\hat{z}_0, \hat{z}_1)$ with \emph{mean squared error (MSE)} $\epsilon^2$. Suppose the defender optimizes coverage against this estimate. If $\epsilon^2 \leq (1 - z_0)^2$, the ratio between the highest $DEU$ under the estimate of $(\hat{z}_0, \hat{z}_1)$ with MSE $\epsilon^2$ and the lowest $DEU$ is:
\begin{align}
\frac{z_0 + \epsilon}{z_1 + \epsilon}
\end{align}
\end{thm}
\begin{proof}
Given the condition that $\epsilon^2 \leq (1 - z_0)^2$, there are two configurations of $\hat{\bm{z}}$ that have mean squared error $\epsilon^2$: $\hat{z}_0 = z_0 \pm \epsilon, \hat{z}_1 = z_1 \mp \epsilon$, yielding defender utility $-z_1-\epsilon$ and $-z_0-\epsilon$, respectively, because the attacker always attacks the target with underestimated value. The condition on $\epsilon^2\leq(1-z_0)^2$ is required to make both estimates feasible. $z_0+\epsilon \geq z_1+\epsilon$ because $z_0 \geq z_1$.
\end{proof}

\begin{lemma}
Consider a two-target, zero-sum SSG with a rational attacker, and a single defense resource, which is not subject to scheduling contraints.
The optimal defender coverage is $x_0 = z_0$ and $x_1 = z_1$, and the defender's payoff under this coverage is $-(1-z_0)z_0=-(1-z_1)z_1$.
\end{lemma}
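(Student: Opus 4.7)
The setup is a two-target zero-sum SSG with one defense resource and no scheduling constraints, so the feasible set is $\{(x_0,x_1) : x_0,x_1 \geq 0,\ x_0 + x_1 \leq 1\}$. A rational attacker picks $\arg\max_i (1-x_i)z_i$ and receives the maximum, and since the game is zero-sum the defender's utility is the negative of this maximum. Recall $z_0 \geq z_1 \geq 0$ and $z_0 + z_1 = 1$.

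My plan is to proceed in three short steps. First, I would argue that at any optimal defender strategy the budget is tight, i.e.\ $x_0 + x_1 = 1$: if $x_0 + x_1 < 1$, adding a small amount of coverage to whichever target the attacker is attacking strictly decreases the attacker's best-response value and hence strictly improves the defender. Second, I would show that at the optimum the attacker is indifferent between the two targets, i.e.\ $(1-x_0)z_0 = (1-x_1)z_1$: otherwise one target, say $i$, has a strictly larger $(1-x_i)z_i$, so the attacker attacks $i$, and shifting an infinitesimal amount of coverage from the other target to $i$ decreases the attacker's realized value (the two terms are continuous in $x_0$), contradicting optimality. Third, I would solve the two equations $x_0 + x_1 = 1$ and $(1-x_0)z_0 = (1-x_1)z_1$. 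Substituting $x_1 = 1 - x_0$ gives $(1-x_0)z_0 = x_0 z_1$, so $x_0 = z_0/(z_0+z_1) = z_0$, using $z_0+z_1=1$. Hence $x_1 = 1 - z_0 = z_1$.

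Finally, I would compute the defender payoff. Under this coverage the attacker's value on either target is $(1-z_0)z_0$, and by $1 - z_0 = z_1$ this also equals $(1-z_1)z_1 = z_0 z_1$. The defender's payoff is thus $-(1-z_0)z_0 = -(1-z_1)z_1$, as claimed.

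The only subtle step is the indifference argument in step two; the rest is algebra. I expect to handle that cleanly by noting that $(1-x_0)z_0$ and $(1-x_1)z_1$ are linear in $x_0$ along the budget line $x_1 = 1 - x_0$, with opposite slopes, so their maximum is minimized exactly where they cross, yielding the claimed solution without any case analysis.
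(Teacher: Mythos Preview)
Your proposal is correct and follows essentially the same approach as the paper: both argue that the defender's optimum occurs where the attacker is indifferent between the two targets, then verify that $x_0 = z_0$, $x_1 = z_1$ achieves this and compute the resulting payoff. Your treatment is more thorough---the paper simply asserts the indifference condition and checks the solution, whereas you also justify why the budget is tight and why indifference must hold at the optimum---but the underlying idea is identical.
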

\begin{proof}
The defender's maximum payoff is achieved when the expected value for attacking each target is equal, and we require that $x_0 + x_1 \leq 1$ for feasibility. With $x_0 = z_0$ and $x_1 = z_1$, the attacker's payoff is $(1-z_0)z_0$ if he attacks target 0 and $(1-z_1)z_1=(1-(1-z_0))(1-z_0) = z_0(1-z_0)$ if he attacks target 1.
\end{proof}

\begin{thm}[Zero-sum]
\label{thm:twotargetzerosum}
Consider the same setting as Thm.~\ref{thm:twotargetsequal} except the utilities are zero-sum. If $\epsilon^2 \leq (1 - z_0)^2$, the ratio between the highest $DEU$ under the estimate of $(\hat{z}_0, \hat{z}_1)$ with MSE $\epsilon^2$ and the lowest $DEU$ is:
\begin{align}
\frac{ (1-(z_1-\epsilon))z_1 }{ (1-(z_0-\epsilon))z_0 }
\end{align}
\end{thm}
\begin{proof}
Given the condition that $\epsilon^2 \leq (1 - z_0)^2$, there are two configurations of $\hat{\bm{z}}$ that have mean squared error $\epsilon^2$: $\hat{z}_0 = z_0 \pm \epsilon, \hat{z}_1 = z_1 \mp \epsilon$, yielding defender utility $-(1-(z_1-\epsilon))z_1$ and $(1-(z_0-\epsilon))z_0$, respectively, because the attacker always attacks the target with underestimated value. The condition on $\epsilon^2$ is required to make both estimates feasible. Because $z_0 \geq z_1$, $-(1-(z_0-\epsilon))z_0 \leq -(1-(z_1-\epsilon))z_1$.
\end{proof}

\begin{thm}
Consider the setting of Thm.~\ref{thm:twotargetzerosum}, but in the case of a QR attacker. For any $0 \leq \alpha \leq 1$, if $\lambda \geq \frac{2}{(1- \alpha)\epsilon} \log \frac{1}{(1 - \alpha) \epsilon}$, the defender's loss compared to the optimum may be as much as $\alpha (1- \epsilon) \epsilon$ under a target value estimate with MSE $\epsilon^2$.
\end{thm}
\begin{proof}
Let $f(p)$ denote the defender's utility with coverage probability $p$ against a perfectly rational attacker and $g(p)$ denote their utility against a QR attacker. Suppose that we have a bound
\begin{align*}
    g(p) - f(p) \leq \delta
\end{align*}
for some value $\delta$. Let $p^*$ be the optimal coverage probability under perfect rationality. Note that for an alternate probability $p' > p^*$ 

\begin{align*}
    g(p') &\leq f(p') + \delta\\
    &= f(p^*) - (p' - p^*)\epsilon + \delta\\
    &\leq g(p^*)- (p' - p^*)\epsilon + \delta \quad \text{(since $f(p) \leq g(p)$ holds for all $p$)}
\end{align*}

and so any $p' > p^* + \frac{\delta}{\epsilon}$ is guaranteed to have $g(p') < g(p^*)$, implying that the defender must have  $p' \leq p^* + \frac{\delta}{\epsilon}$ in the optimal QR solution.  

We now turn to estimating how large $\lambda$ must be in order to get a sufficiently small $\delta$. Let $q$ be the probability that the attacker chooses the first target under QR. Note that we have $f(p) = \epsilon p$ and $g(p) = (1 - p)(1- \epsilon)q + p \epsilon (1 - q)$. We have

\begin{align*}
    g(p) - f(p) &= (1 - p)(1- \epsilon)q + p \epsilon (1 - q) - \epsilon p\\
    &= [(1 - p)(1 - \epsilon) - p \epsilon]q\\
    &\leq q
\end{align*}

For two targets with value 1 and $\epsilon$, $q$ is given by

\begin{align*}
    \frac{e^{\lambda (1 - \epsilon) (1-p)}}{e^{\lambda \epsilon p} + e^{\lambda (1 - \epsilon) (1-p)}} = \frac{1}{1 + e^{\lambda [\epsilon p - (1 - \epsilon) (1-p)]}}
\end{align*}

Provided that $\lambda \geq \frac{1}{\epsilon p - (1 - \epsilon) (1-p)} \log \frac{1}{\delta} = \frac{1}{p - (1- \epsilon)} \log \frac{1}{\delta}$, we will have $g(p) - f(p) \leq \delta$. Suppose that we would like this bound to hold over all $p \geq 1 - \alpha\epsilon$ for some $0 < \alpha < 1$. Then, $p - (1 - \epsilon) \geq (1 - \alpha) \epsilon$ and so $\lambda \geq \frac{1}{(1 - \alpha)\epsilon} \log \frac{1}{\delta}$ suffices. Now if we take $\delta \leq (1 - \alpha)\epsilon^2$, we have that for $\lambda \geq \frac{2}{(1 - \alpha)\epsilon} \log \frac{1}{(1 - \alpha)\epsilon}$, the QR optimal strategy $p'$ must satisfy $p' \leq 1 - \alpha \epsilon$, implying that the defender allocates at least $\alpha \epsilon$ coverage to the target with true value 0. Suppose the attacker chooses the target with value 1 with probability $q^*$. Then, the defender's loss compared to the optimum is $q^* \alpha \epsilon$. By a similar argument as above, it is easy to verify that under our stated conditions on $\lambda$, and assuming $\alpha \geq \frac{1}{2}$, we have $q^* \geq (1 - \epsilon)$, for total defender loss $(1 - \epsilon)\alpha \epsilon$. \end{proof}
\section{Theorem 4}

\begin{thm}
Let $\bm{x}$ be a strict local minimizer of $f$ over $\mathcal{X}$. Then, except on a measure zero set, there exists a convex set $\mathcal{I}$ around $\bm{x}$ such that $\bm{x}_{\mathcal{I}}^*(\theta) = \arg\min_{x \in \mathcal{I} \cap \mathcal{X}} f(\bm{x}, \theta)$ is differentiable. The gradients of $\bm{x}_{\mathcal{I}}^*(\theta)$ with respect to $\theta$ are given by the gradients of solutions to the local quadratic approximation $\min_{\bm{x} \in \mathcal{X}} \frac{1}{2}\bm{x}^\top \nabla^2 f(\bm{x}, \theta) \bm{x} + \bm{x}^\top \nabla f(\bm{x}, \theta)$.
\end{thm}

\begin{proof}
By continuity, there exists an open ball around $x$ on which $\nabla^2 f(x, \theta)$ is negative definite; let $\mathcal{I}$ be this ball. Restricted to $\mathcal{X} \cap \mathcal{I}$, the optimization problem is convex, and satisfies Slater's condition by our assumption on $\mathcal{X}$ combined with Lemma \ref{lemma:intersection}. Therefore, the KKT conditions are a necessary and sufficient description of $x_{\mathcal{I}}^*(\theta)$. We now use this fact to give an explicit expression for the gradients of $x^*_{\mathcal{I}}(\theta)$. Since the equality constraints given by the function $\bm{h}$ are affine, we represent them as a matrix $A$, where $\bm{h(x)} = A \bm{x}$. The KKT conditions imply that $(\bm{x}, \bm{\mu}, \bm{\nu})$ is an optimum if and only if the following equations hold:


\begin{align*}
    \bm{g(x)} \leq 0\\
    A\bm{x} = 0\\
    \bm{\mu} \geq 0\\
    \bm{\mu} \odot \bm{g(x)}\\
    \nabla_{\bm{x}} f(\bm{x}, \theta) + \bm{\mu}^\top \nabla \bm{g}(\bm{x}) + \bm{\nu}^\top  A = 0
\end{align*}

Differentiating through this linear system using the implicit function theorem, as in Amos and Kolter (2017) and Donti et al.\ (2017), results in the following expression for the gradients of the optimal solution with respect to $\theta$: 

	\begin{align}\label{eq:kkt}
	\begin{bmatrix}
	\frac{\partial \bm{x}}{\partial\theta}\\ 
	\frac{\partial \bm{\mu}}{\partial \theta}\\
	\frac{\partial \bm{\nu}}{\partial \theta}
	\end{bmatrix}
	=
	-	K^{-1}
    \begin{bmatrix}
	\frac{\partial \nabla_x f(x, \theta)}{\partial \theta} \\
	0\\
	0
	\end{bmatrix} \\
	K = \begin{bmatrix}
	&\nabla^2_x f(x, \theta) + \sum_{i = 1}^{n_{ineq}} \mu_i \nabla^2_x \bm{g}(\bm{x}) & \left(\frac{\partial \bm{g}(\bm{x})}{\partial \bm{x}}\right)^T & A^T\\
	& diag(\bm{\mu}) \left(\frac{\partial \bm{g}(\bm{x})}{\partial \bm{x}}\right) & diag(\bm{g}(\bm{x})) & 0\\
	&A & 0 & 0
	\end{bmatrix}
	\end{align}
\end{proof}

We now note that the above expression depends only on the gradient and Hessian of $f$, along with the constraints $g$. Therefore, differentiating through the KKT conditions of the local quadratic approximation results in the same expression (since the quadratic approximation is defined exactly to be the second-order problem with the same gradient and Hessian as the original). This implies that $x_{\mathcal{I}}^*(\theta)$ is differentiable whenever the quadratic approximation is differentiable (i.e., whenever the RHS matrix above is invertible). Note that in the quadratic approximation, we can drop the requirement that $x \in \mathcal{I}$ since the minmizer over $x \in \mathcal{X}$ already lies in $\mathcal{I}$ by continuity.  Using Theorem 1 of Amos and Kolter (2017), the quadratic approximation is differentiable except at a measure zero set, proving the theorem.

\begin{lemma} \label{lemma:intersection}
Let $g_1...g_m$ be convex functions and consider the set $\mathcal{X} = \{x : \bm{g}(x) \leq 0\}$. If there is a point $x^*$ which satisfies $\bm{g}(x) < 0$, then for any point $x' \in \mathcal{X}$, the set $\mathcal{X} \cap B(x', \delta)$ contains a point $x_{int}$ satisfying $g(x) < x_{int}$ and $d(x_{int}, x') < \delta$.  
\end{lemma}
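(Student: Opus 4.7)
The plan is to establish this Slater-style density statement by constructing $x_{int}$ explicitly as a convex combination of the given feasible point $x'$ and the strictly feasible point $x^*$. The underlying intuition is that moving even infinitesimally from a boundary-feasible point toward a strict interior point should land inside the relative interior, by convexity of the constraint functions. So my construction would be $x_t = (1-t)\,x' + t\,x^*$ for a small parameter $t \in (0,1]$ to be chosen.

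First I would verify strict feasibility of $x_t$. By convexity of each $g_i$,
\begin{align*}
g_i(x_t) \;\leq\; (1-t)\,g_i(x') + t\,g_i(x^*) \;\leq\; t\,g_i(x^*),
\end{align*}
using $g_i(x') \leq 0$ since $x' \in \mathcal{X}$. Because $g_i(x^*) < 0$ by the Slater hypothesis and $t > 0$, this gives $g_i(x_t) < 0$ for every constraint $i$, hence $\bm{g}(x_t) < 0$. Next I would bound the distance: $d(x_t, x') = t\,\|x^* - x'\|$, so it suffices to pick $t \in (0, \min\{1,\, \delta / \|x^* - x'\|\})$ to ensure $d(x_t, x') < \delta$. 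Setting $x_{int} = x_t$ then satisfies both required properties.

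The only mild subtlety is the degenerate case $x^* = x'$, in which $x'$ is itself strictly feasible and we can simply take $x_{int} = x'$ (or any sufficiently small perturbation staying inside the open set $\{\bm{g} < 0\}$, which is nonempty and open by continuity of the $g_i$). Beyond this, the argument is the standard Slater-interior density fact, so I do not anticipate a real obstacle; the lemma is essentially bookkeeping that allows Theorem~\ref{theorem:gradients} to invoke Slater's condition on the intersection $\mathcal{X} \cap \mathcal{I}$ with the convex neighborhood $\mathcal{I}$ around a strict local minimizer.
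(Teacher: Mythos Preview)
Your proposal is correct and follows essentially the same approach as the paper: both construct $x_{int}$ as a convex combination of $x'$ and $x^*$, using convexity of each $g_i$ to guarantee strict feasibility for any interior parameter, and then choosing the parameter close enough to the $x'$ end to land inside $B(x',\delta)$. Your write-up is in fact more explicit---you give the convexity inequality for $g_i$, the explicit distance bound $t\|x^*-x'\|$, and you handle the degenerate case $x^*=x'$---whereas the paper states these steps tersely.
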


\begin{proof}
By convexity, for any $t \in  [0,1]$, the point $(1-t)x^* + tx'$ lies in $\mathcal{X}$, and for $t < 1$, satisfies $g((1-t)x^* + tx') < 0$. Moreoever, for $t$ sufficiently large (but strictly less than 1), we must have $d((1-t)x^* + tx', x') < \delta$, proving the existence of $x_{int}$. 
\end{proof}
\section{Significance and Standard Deviation}
A paired sample t-test is used to measure significance.

\begin{table}
\caption{Synthetic Data, 8 Targets, Vary \# of Attacks}
\centering
\fontsize{9}{9}
\selectfont
\begin{tabular}{cccc}
\toprule
    Attacks & p-value & std 2S-GT & std DF\\
    \midrule
     2& 3.8e-15& 0.0109& 0.0056\\
     4&5.8e-13&0.0098&0.0064\\
     6&8.7e-10&0.0088&0.0071\\
     8&7.7e-06&0.0091&0.0073\\
     10&0.018&0.0096&0.0074\\
     12&0.22& 0.0097&0.0068\\
     14&0.87&0.0108&0.0077\\
     16& 0.57&0.0115&0.0079\\
     \bottomrule
\end{tabular}
\end{table}
\begin{table}
\centering
\fontsize{9}{9}
\selectfont
\caption{Synthetic Data, 8 Targets, Vary \# of Games}
\begin{tabular}{cccc}
\toprule
    Games & p-value & std 2S-GT & std DF\\
    \midrule
     25&0.033& 0.0080& 0.0075\\
     50&4.0e-10 & 0.0100 & 0.0063\\
     75&2.3e-06 & 0.0130 & 0.0080\\
     150&0.0093 & 0.0160 & 0.0088\\
     \bottomrule
\end{tabular}
\end{table}

\begin{table}
\centering
\caption{Synthetic Data, 8 Targets, Vary \# of Features}
\fontsize{9}{9}
\selectfont
\begin{tabular}{cccc}
    \toprule
     Features &  p-value & std 2S-GT & std DF\\
     \midrule
     10& 0.031 & 0.0153&0.0153\\
     25&1.7e-4& 0.0125&0.0107\\
     50&0.65&0.0152&0.0099\\
     100&4.0e-10&0.0100&0.0063\\
     200&1.1e-29&0.0141&0.0058\\
     \bottomrule
\end{tabular}
\end{table}
\begin{table}
\centering
\fontsize{9}{9}
\selectfont
\caption{Synthetic Data, 24 Targets, Vary \# of Attacks}
\begin{tabular}{cccc}
    \toprule
     Attacks & p-value & std 2S-GT & std DF \\
     \midrule
     2 & 0.0029 & 0.02116 & 0.0071\\
     6 & 1.5e-4 & 0.0206 & 0.0099\\
     10 & 0.030 & 0.0226 & 0.0094\\
     14 &0.89 & 0.0187 & 0.0120\\
     18 &0.75 & 0.0172 & 0.0127\\
     \bottomrule
\end{tabular}
\end{table}
\begin{table}
\fontsize{9}{9}
\selectfont
\centering
\caption{Synthetic Data, 24 Targets, Vary \# of Games}
\begin{tabular}{cccc}
\toprule
     Games & p-value & std 2S-GT & std DF \\
     \midrule
     10 & 0.32 & 0.0143 & 0.0087\\
     25 & 0.62 & 0.0289 & 0.0124\\
     50 & 1.1e-9 & 0.0291 & 0.0140\\
     100 & 1.5e-4 & 0.0352 & 0.0148\\
     \bottomrule
\end{tabular}
\end{table}

\begin{table}
\centering
\fontsize{9}{9}
\selectfont
\caption{Synthetic Data, 24 Targets, Vary \# of Features}
\begin{tabular}{cccc}
\toprule
     Features & p-value & std 2S-GT & std DF \\
     \midrule
     10 & 0.42 & 0.0115 & 0.0110 \\
     25 & 2.9e-6& 0.0120 & 0.0097\\
     50 & 9.2e-10 & 0.0135 & 0.0088\\
     100 & 1.6e-5& 0.0207 & 0.0081\\
     200 & 9.7e-48 & 0.0164 & 0.0051\\
     \bottomrule
\end{tabular}
\end{table}

\begin{table}
\centering
\fontsize{9}{9}
\selectfont
\caption{Human-Subject Data, 8 Targets, Vary \# of Attacks}
\begin{tabular}{cccc}
\toprule
Attacks & p-value & std 2S-GT & std DF\\
\midrule
1 & 0.15 & 0.0070 & 0.0074 \\
5 & 0.07 & 0.0063 & 0.0102\\
10 & 0.51 & 0.0074 & 0.0108\\
20 & 0.57 & 0.0070 & 0.0096 \\
30 & 0.77 & 0.0070 & 0.0090 \\
\bottomrule
\end{tabular}\\
\end{table}

\begin{table}
\fontsize{9}{9}
\selectfont
\caption{Human-Subject Data, 24 Targets, Vary \# of Attacks}
\centering
\begin{tabular}{cccc}
\toprule
Attacks & p-value & std 2S-GT & std DF\\
\midrule
1 & 0.062 & 0.0067 & 0.0134 \\
5 & 0.0012 & 0.0042 & 0.0127\\
10 & 0.0050 & 0.0071 & 0.0127\\
20 & 0.0040 & 0.0055 & 0.0117\\
30 & 0.0023 & 0.0052 & 0.0120\\
\bottomrule
\end{tabular}
\end{table}

\begin{table}
\caption{Human-Subject Data, 8 Targets, Vary \# of Games}
\fontsize{9}{9}
\selectfont
\centering
\begin{tabular}{cccc}
\toprule
Games & p-value & std 2S-GT & std DF\\
\midrule
5 & 0.0021 & 0.0072 & 0.0039 \\
10 & 0.88 & 0.0073 & 0.0061 \\
15 & 0.24 & 0.0069 & 0.0081\\
20 & 0.038 & 0.0079 & 0.0095 \\
25 & 0.24 & 0.0080 & 0.0106\\
30 & 0.38 & 0.0088 & 0.0093 \\
\bottomrule
\end{tabular}
\end{table}
\begin{table}
\fontsize{9}{9}
\selectfont
\caption{Human-Subject Data, 24 Targets, Vary \# of Games}
\centering
\begin{tabular}{cccc}
\toprule
Games & p-value & std 2S-GT & std DF\\
\midrule
5 & 2.2e-10 & 0.0049& 0.0032 \\
10 & 2.1e-4 & 0.0058 & 0.0062 \\
15 & 4.7e-4 & 0.0069& 0.0127\\
20 & 0.0059 & 0.0077 & 0.0123\\
25 & 0.025 & 0.0068 & 0.0102\\
30 & 0.074 & 0.0077 & 0.0124\\
\bottomrule
\end{tabular}
\end{table}

\end{document}